
\documentclass[12pt,a4paper]{article}

\usepackage{amsmath,amssymb,amsfonts,amsthm}

\usepackage{t1enc}
\usepackage[english]{babel}
\usepackage[latin1]{inputenc}
\usepackage{graphicx}
\usepackage{color}
\usepackage{transparent}

\usepackage{hyperref}

\newtheorem{theorem}{Theorem}[section]

\theoremstyle{definition}

\renewcommand{\b}[1]{\bar{#1}}







\author{Andr\'es E. Ace\~na$^{1}$ and Sergio Dain$^{2,3}$\\
\\
 $^1$ Instituto de Ciencias B\'asicas, ICB, \\
 Universidad Nacional de Cuyo, Mendoza, Argentina\\
 $^2$ Facultad de Matem\'atica, Astronom\'\i a y F\'\i sica, FaMAF, \\
 Universidad Nacional de C\'ordoba \\
 Instituto de F\'\i sica Enrique Gaviola, IFEG, CONICET, \\
 Ciudad Universitaria (5000) C\'ordoba, Argentina\\
 $^3$ Max-Planck-Institut f{\"u}r Gravitationsphysik, Albert Einstein\\
 Institut, Am M\"uhlenberg 1 D-14476 Potsdam Germany}
\date{\today}
\title{Stable isoperimetric surfaces in super-extreme Reissner-Nordstr\"om}

\begin{document}
\maketitle

\begin{abstract}
  We study isoperimetric surfaces in the Reissner-Nordstr\"om spacetime,
  with emphasis on the cuasilocal inequality between area and charge. We
  analyze the stability of the isoperimetric spheres and we found that there
  is a lower bound on the area in terms of the charge, and that the inequality
  is saturated in the transition from the superextremal to the subextremal
  case. We also derive a general inequality between area and charge for stable
  isoperimetric surfaces in maximal electro-vacuum initial data.
\end{abstract}

\section{Introduction}

An important method to obtain physically relevant properties of General
Relativity is through geometrical inequalities. They relate quantities of
physical interest and tell us what type of phenomena is allowed within the
theory. Particularly fruitful have been the search for geometrical inequalities
for axially symmetric black holes (for a recent review see \cite{dain12} and
references therein), where attention to the angular momentum has been paid.
There are two important possible generalizations of these kind of
inequalities. The first one is for non-axially symmetric spacetimes. Axial
symmetry is used in a crucial way to define angular momentum. In order to study
general spacetimes a model problem is to replace angular momentum by
electric charge, this has been done recently in \cite{Dain:2011kb}. The second,
and more difficult, generalization is to consider geometrical inequalities for
general objects (i.e. not only black holes). Remarkably, in this kind of
inequalities the black hole trapped surfaces are replaced by stable
isoperimetric surfaces.  In particular in \cite{Dain:2011kb} the following
cuasilocal geometrical inequality has been obtained,
\begin{equation}
\label{des1}
 A\geq\frac{4}{3}\pi (Q_E^2+Q_M^2),
\end{equation}
where $A$ is the area of a stable isoperimetric surface $\Sigma$ in an
electro-vacuum, maximal initial data, with non-negative cosmological constant
and $Q_E$ and $Q_M$ are the electric and magnetic charges of $\Sigma$. This
inequality tells us that it is not possible to put an arbitrarily large
quantity of charge inside an isoperimetric surface. The requirement of $\Sigma$
being isoperimetrical can not be dropped without further requirements. This can
be seen by looking at the spacetime presented by Bonnor
 \cite{bonnor98}. There, a spacetime is constructed where a spheroidal
distribution of charge is surrounded by electro-vacuum. The solution is such
that the quotient $A/Q^2$ for the surface of the spheroid can be made
arbitrarily small.

Taking into account how \eqref{des1} is obtained it is possible to conjecture
that the inequality is not sharp. To investigate this relation we consider the
Reissner-Nordstr\"om spacetime, which can be considered the simplest
non-trivial electro-vacuum solution of Einstein equations.  We found that in
this case the inequality \eqref{des1} is not sharp, and that the bound of the
area in terms of the charge is obtained in the transition from superextremal to
subextremal. We also isolate the possible cause of \eqref{des1} not being sharp
and present a new sharp inequality.

\section{Main results}

Let us consider a spherically symmetric 3-dimensional metric, written in the form
\begin{equation}\label{met3}
 ds^2=f(r)dr^2+r^2d\Omega^2,\hspace{1cm}d\Omega^2=d\theta^2+\sin^2\theta \, d\phi^2.
\end{equation}
We are interested in the Reissner-Nordstr\"om metric, in which case
\begin{equation}
\label{eq:frn}
 f(r)=\left(1-\frac{2m}{r}+\frac{Q^2}{r^2}\right)^{-1},
\end{equation}
where $m$ is the mass and $Q$ the charge. According to the range of $m$ and $Q$
we have three cases, sub-extreme, $m^2>Q^2$, extreme, $m^2=Q^2$ and
super-extreme, $m^2<Q^2$. In the first two cases, the coordinate $r$ has range
$r_0\leq r\leq\infty$, where $r_0=m+\sqrt{m^2-Q^2}$. In the super-extreme case
the coordinate has range $0\leq r\leq\infty$. 

A surface is called isoperimetric if its area is an extreme with respect to
nearby surfaces that enclose the same volume. This implies that its mean
curvature is constant.  It is also called stable if its area is a minimum. For
further discussion on isoperimetric surfaces in this context we refer to
\cite{Dain:2011kb} \cite{dain12} and for the concept of stability see
\cite{Barbosa88}. We have the following condition on an isoperimetric surface
$\Sigma$ to be stable \cite{Barbosa88},
\begin{equation}
 F(\alpha)>0,
\end{equation}
where
\begin{equation}\label{stabFun}
 F(\alpha)=
\int_\Sigma\left[-\alpha\Delta_\Sigma\alpha-\alpha^2\left(\chi_{AB}\chi^{AB}+R_{ab}n^an^b\right)\right]dA_\Sigma,   
\end{equation}
and $\alpha$ is any function such that
\begin{equation}\label{intalpha}
 \int_{\Sigma}\alpha \, dA_\Sigma=0.
\end{equation}
In \eqref{stabFun} $R_{ab}$ is the three-dimensional Ricci tensor and $n^a$ the
normal, $\chi_{AB}$ the second fundamental form and $dA_\Sigma$ the volume
element of $\Sigma$.

It is known that for $m^2>Q^2$ all
spheres of revolution $r=constant$ are isoperimetric stable surfaces 
(see \cite{corvino07}). We want to analyze the case $m^2<Q^2$.

\begin{theorem}\label{mt1}
Consider the spheres $r=constant$ in the Reissner-Nordstr\"om metric
given by equations \eqref{met3}  \eqref{eq:frn}. Then we have the
following result:
\begin{enumerate}
\item  For $0\leq|Q|\leq m $ all these surfaces are isoperimetric
  stable.

\item For $m<0$ all these surfaces are isoperimetric but not stable. 

\item For $0<m<|Q|$ the surfaces with radius $r>r_c$ are isoperimetric stable. The
  surfaces with $r<r_c$ are unstable, where
  \begin{equation}
    \label{eq:1}
    r_c=\frac{2Q^2}{3m}.
  \end{equation}
In particular, all stable isoperimetric surfaces satisfy the bound
\begin{equation}
  \label{eq:2}
 A \geq \frac{16}{9}\pi Q^2.  
\end{equation} 
\item There is not a sphere in Reissner-Nordstr\"om where the inequality \eqref{eq:2}  is
  saturated. The inequality is saturated in the limit for the sphere $r=r_c$
  when the extreme case is approached from the superextreme case.
\end{enumerate}
\end{theorem}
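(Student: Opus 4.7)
My plan is to reduce the stability test for each sphere $r=\text{const}$ to a single scalar inequality in $r$ via a spherical-harmonic decomposition of the test function $\alpha$. The isoperimetric character of the spheres is immediate, since they are orbits of the ambient $SO(3)$ and therefore have constant mean curvature. All four conclusions then follow from a case analysis of the resulting scalar condition.

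To set up that analysis I would evaluate the two geometric ingredients of \eqref{stabFun} on the sphere of area-radius $r$. The second fundamental form is pure trace by symmetry, giving $\chi_{AB}\chi^{AB}=2/(r^2 f)$. A direct calculation for any 3-metric of the form \eqref{met3} yields $R_{rr}=f'/(rf)$, hence $R_{ab}n^an^b=f'/(rf^2)$. Substituting \eqref{eq:frn} and using $1/f=1-2m/r+Q^2/r^2$ produces the clean expression
\begin{equation*}
\chi_{AB}\chi^{AB}+R_{ab}n^an^b=\frac{2}{r^2}-\frac{6m}{r^3}+\frac{4Q^2}{r^4}=:C(r).
\end{equation*}
Next I would decompose $\alpha$ in spherical harmonics on $\Sigma$, using that the eigenvalues of $-\Delta_\Sigma$ are $\ell(\ell+1)/r^2$. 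The constraint \eqref{intalpha} removes exactly the $\ell=0$ mode, so \eqref{stabFun} becomes a diagonal quadratic form with coefficients $\ell(\ell+1)/r^2-C(r)$. Since this factor is strictly increasing in $\ell$, stability is equivalent to the $\ell=1$ condition $2/r^2>C(r)$, which simplifies to $6mr>4Q^2$.

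The four parts of the theorem then follow from casework on this inequality. When $m<0$ it is impossible for $r>0$, giving (2). When $m>0$ it is equivalent to $r>r_c=2Q^2/(3m)$, producing the threshold in (3); in the sub/extreme case $m^2\geq Q^2$ implies $r_c\leq 2m/3<m\leq r_0$, so every admissible sphere is stable, which is (1). For the area bound, stable spheres in the superextreme regime satisfy $A=4\pi r^2\geq 4\pi r_c^2=16\pi Q^4/(9m^2)$, and $Q^2/m^2>1$ makes this strictly larger than $16\pi Q^2/9$; a quick comparison with $r_0$ gives the same strict inequality in the sub/extreme cases, so the bound is never attained in Reissner-Nordstr\"om. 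Finally, letting $m\nearrow|Q|$ along the critical family $r=r_c$ sends $r_c\to 2|Q|/3$ and hence $A\to 16\pi Q^2/9$, yielding (4).

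The one place the argument could plausibly look subtle is identifying the $\ell=1$ harmonic as the binding direction, but the mean-zero condition \eqref{intalpha} restricts $\alpha$ to $\ell\geq 1$ modes, the eigenvalues grow with $\ell$ while $C(r)$ is $\ell$-independent, and so the test reduces to the single inequality $6mr>4Q^2$. Everything else is substitution and elementary algebra.
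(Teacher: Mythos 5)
Your proposal is correct and follows essentially the same route as the paper: both reduce stability to the sign of $2/r^2-C(r)$ using the spectral gap of the round-sphere Laplacian (the paper via the Rayleigh-quotient characterization of $\lambda_1=2$ with the $\ell=1$ eigenfunction as test function, you via the equivalent full harmonic decomposition), arriving at the same condition $6mr>4Q^2$ and hence $r_c=2Q^2/(3m)$. The subsequent case analysis, the comparison with $r_0$ in the sub/extreme regime, and the limiting saturation of $A\geq\tfrac{16}{9}\pi Q^2$ as $m\nearrow|Q|$ along $r=r_c$ coincide with the paper's argument.
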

Remark: we also prove that the stability operator is not positive for test
functions that do not satisfy the volume preserving condition \eqref{intalpha}
(for example the constants).

Note that the bound \eqref{eq:2} is higher than the one obtained in
\cite{Dain:2011kb}. We expect this bound to be optimal. Following the analysis
in \cite{Dain:2011kb} which is based on \cite{christodoulou88} we get the
following result.

\begin{theorem}
\label{t:main3} 
Consider an electro-vacuum, maximal initial data, with a non-negative
cosmological constant. Assume that $\Sigma$ is a  stable isoperimetric
sphere. Then
\begin{equation}
  \label{eq:isoperimetric} 
 \left(1-\frac{1}{16\pi}\chi^2 A \right)  A\geq \frac{4\pi}{3}(Q^{2}_{E}+Q^{2}_{M}), 
\end{equation}
where  $Q_{E}$  and $Q_{M}$ are the electric and magnetic charges of
$\Sigma$ and $\chi$ is its mean curvature.   

Moreover, the  surfaces $r_c$ in super-extreme Reissner-Nordstr\"om satisfy the
equality in (\ref{eq:isoperimetric}).

\end{theorem}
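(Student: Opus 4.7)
The approach mimics the Hersch--Christodoulou argument used in \cite{Dain:2011kb} to derive \eqref{des1}, but keeps the $\chi^2 A$ correction that was dropped there. Specifically, the stability inequality $F(\alpha)>0$ will be fed with a balanced conformal map from $\Sigma$ to the round sphere, and the subsequent manipulations (Gauss equation, Hamiltonian constraint, Cauchy--Schwarz on the charge fluxes) will be done so as to retain the full $\chi^2$-dependent term rather than discarding it as manifestly positive.

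Since $\Sigma$ is a topological sphere, uniformization produces a conformal diffeomorphism $\Phi=(\Phi^1,\Phi^2,\Phi^3):\Sigma\to S^2\subset\Rt$, and Hersch's lemma (precomposition with a M\"obius transformation of $S^2$) allows me to assume $\int_\Sigma\Phi^i\,dA_\Sigma=0$ for $i=1,2,3$, so that each $\Phi^i$ is admissible in \eqref{intalpha}. Summing $F(\Phi^i)\geq 0$ over $i$ and using the identities $\sum_i(\Phi^i)^2=1$ and $\sum_i\int_\Sigma|\nabla_\Sigma\Phi^i|^2\,dA_\Sigma=8\pi$ (conformal invariance together with degree one) yields
\begin{equation}
8\pi \geq \int_\Sigma\bigl(\chi_{AB}\chi^{AB}+R_{ab}n^a n^b\bigr)\,dA_\Sigma.
\end{equation}
I would then use the Gauss equation to rewrite the integrand as $\tfrac12(R-R_\Sigma+\chi^2+\chi_{AB}\chi^{AB})$, apply the pointwise trace inequality $\chi_{AB}\chi^{AB}\geq\chi^2/2$ valid on any two-surface, and invoke Gauss--Bonnet $\int_\Sigma R_\Sigma\,dA_\Sigma=8\pi$ together with constancy of $\chi$, obtaining $12\pi-\tfrac34\chi^2 A\geq\tfrac12\int_\Sigma R\,dA_\Sigma$.

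The maximal, electro-vacuum, $\Lambda\geq 0$ Hamiltonian constraint then gives $R\geq 2(|E|^2+|B|^2)$; applying Cauchy--Schwarz to the flux integrals $(4\pi Q_E)^2\leq A\int_\Sigma|E|^2\,dA_\Sigma$ and similarly for $Q_M$ produces
\begin{equation}
12\pi-\tfrac34\chi^2 A \geq \frac{16\pi^2(Q_E^2+Q_M^2)}{A}.
\end{equation}
Multiplying by $A$ and collecting factors gives precisely \eqref{eq:isoperimetric}.

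For the saturation statement at $r=r_c$ I would verify that every inequality above becomes equality. Reissner--Nordstr\"om has $\Lambda=0$ and the canonical time-symmetric slice has $K_{ab}\equiv 0$; the sphere $r=r_c$ is umbilic so $\chi_{AB}\chi^{AB}=\chi^2/2$ is tight; the radial electric field $E\propto r^{-2}\hat{r}$ is pointwise normal to $\Sigma$ with constant modulus, saturating Cauchy--Schwarz; and by part~3 of Theorem~\ref{mt1} the sphere $r=r_c$ is marginally stable, with the stability operator annihilating the coordinate functions, so the Hersch step is sharp as well. The main obstacle is essentially bookkeeping: the temptation at several points is to drop $\chi^2 A$ as manifestly non-negative (which is exactly what gives \eqref{des1}), and resisting this throughout is what yields the refined factor $1-\chi^2 A/(16\pi)$; the equality claim in turn requires simultaneous saturation of four independent inequalities, which is precisely what pins down the distinguished radius $r_c=2Q^2/(3m)$ from Theorem~\ref{mt1}.
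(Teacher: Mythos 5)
Your proposal is correct and follows essentially the same route as the paper: the chain of stability with the balanced conformal (Hersch/Christodoulou--Yau) test functions, then Gauss equation plus Gauss--Bonnet, then the maximal electro-vacuum constraint with $\Lambda\geq 0$, then Cauchy--Schwarz on the charge fluxes, all while retaining the $\chi^2 A$ term, is exactly the paper's argument, which simply cites \cite{christodoulou88} for the step you rederive explicitly. The only minor difference is the equality claim: the paper verifies it by directly evaluating \eqref{eq:isoperimetric} on Reissner--Nordstr\"om spheres, which gives $4\pi(2mr-Q^2)\geq\tfrac{4}{3}\pi Q^2$ with equality precisely at $r=r_c$ (attainable only in the super-extreme range), whereas you check saturation of each inequality in the chain (umbilicity, time symmetry, $\Lambda=0$, radial field, marginal stability at $r_c$), which is also valid and somewhat more informative.
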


Remark: The inequality \eqref{eq:isoperimetric} in this theorem is a straight
forward consequence of previous results \cite{christodoulou88}. The
interesting and new part of this theorem is that equality is achieved for this
limit surface in Reissner-Nordstr\"om, showing that previously neglected terms
are of consequence.

It is an interesting open problem to study the same problem for super-extreme
Kerr. In that case the problem is much more complex because the location of the
isoperimetric surfaces is known only numerically (see \cite{Metzger:2004pr}). 

Let us discuss the different regimes for the solution and the relation with
respect to the area-charge inequalities. An appropriate quantity to consider for
an isoperimetric stable surface in this context is
\begin{equation}
 \frac{A}{4\pi Q^2}
\end{equation}
as a function of $\epsilon=Q^2/m^2$. The parameter $\epsilon$ is  the
natural parameter for distinguishing the different regimes, where $\epsilon<1$
corresponds to subextremal, $\epsilon=1$ to extremal and $\epsilon>1$ to
superextremal. For the subextreme case we have
\begin{equation}\label{inSub}
 \frac{A_{sub}}{4\pi Q^2}=\frac{r^2}{Q^2}\geq \frac{r_0^2}{Q^2} = -1+\frac{2}{\epsilon}\left(1+\sqrt{1-\epsilon}\right).
\end{equation}
For the superextreme case
\begin{equation}\label{inSuper}
 \frac{A_{super}}{4\pi Q^2}=\frac{r^2}{Q^2}\geq \frac{r_c^2}{Q^2} = \frac{4}{9}\,\epsilon.
\end{equation}
For comparison, the inequality previously obtained in  \cite{Dain:2011kb}  is
\begin{equation}\label{inPr}
 \frac{A}{4\pi Q^2}\geq \frac{1}{3}.
\end{equation}
It is interesting to note that the bounds in \eqref{inSub} and \eqref{inSuper}
appear because there is a limiting inner sphere. In the subextremal case this
is the boundary of the manifold corresponding to the event horizon, while in
the superextreme case it is the transition from stable to unstable surfaces. We
put these inequalities together in the following graph, where the dark gray  region
are spheres in the subextremal case, the light gray region is the superextremal case
and the bottom line is the previously obtained bound. Here it is worth
noticing that the inequality gets close to equality  as one approaches the extreme
case, both from the subextremal and the superextremal cases. Also, there is a
gap between the inequality \eqref{inPr} and the lower bound, suggesting that in
general it is not optimal.

\begin{figure}[h]
  \centering
  \def\svgwidth{.7\textwidth}
  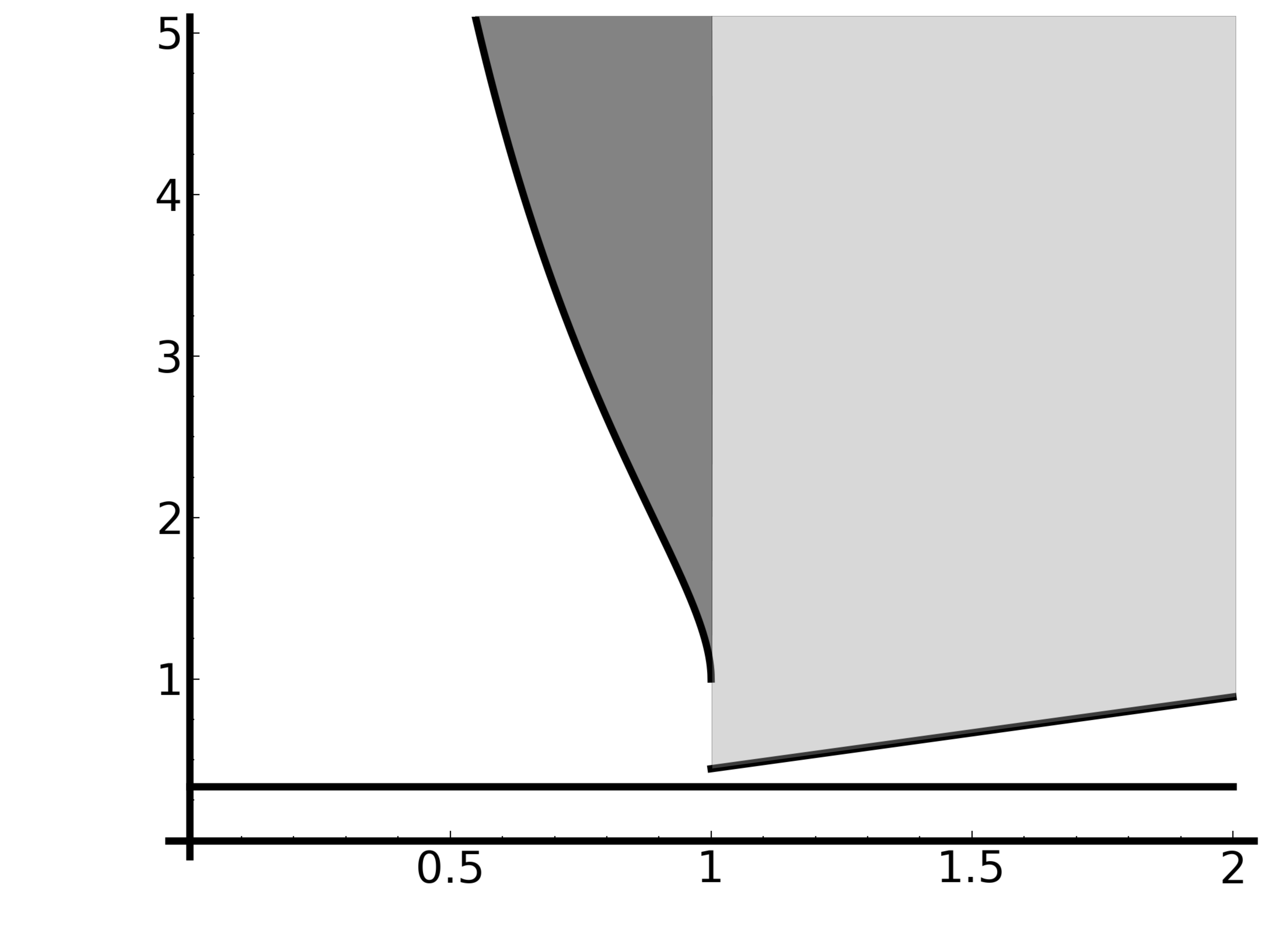
\end{figure}

\section{Proof of the theorems}

\begin{proof}[Proof of theorem \ref{mt1}]
  Let us consider the surface $\Sigma=\{r=constant\}$. From \eqref{met3} and
  \eqref{eq:frn} is a direct calculation to show that the mean curvature is
\begin{equation}
 \chi=\frac{2}{r\sqrt{f}}
\end{equation}
and therefore the surface is isoperimetric. Considering \eqref{stabFun},
\begin{eqnarray}
 F(\alpha) & = &
 \int_{\Sigma}\left[-\alpha\Delta_\Sigma\alpha-\alpha^2\left(\frac{2}{r^2f}+\frac{f'}{rf^2}\right)\right]d\Sigma\\ 
& = & \int_{\Sigma_0}\left[-\alpha\Delta_0\alpha-\alpha^2\frac{1}{f}\left(2+\frac{rf'}{f}\right)\right]d\Sigma_0\\
& = & \int_{\Sigma_0}\left[-\alpha\Delta_0\alpha-\alpha^2 \left(2-\frac{6m}{r}+\frac{4Q^2}{r^2}\right)\right]d\Sigma_0,
\end{eqnarray}
where $\Sigma_0$ is the unit sphere and $\Delta_0$ is the Laplacian on it. The
lowest non-zero eigenvalue $\lambda_1=2$ of the laplacian on the sphere,
$\Delta_0\alpha=-\lambda\alpha$, can be written in the following variational form
\begin{equation}
\label{eq:var} 
\lambda_1=\inf_{\int\alpha \, dS_0=0}\frac{\int|D\alpha|^2dS_0}{\int\alpha^2dS_0}.
\end{equation}
From (\ref{eq:var}) we deduce
\begin{equation}
 \int-\alpha\Delta_0\alpha dS_0=\int|D\alpha|^2dS_0\geq 2\int\alpha^2dS_0,
\end{equation}
and hence
\begin{equation}\label{Falpha}
  F(\alpha) \geq \frac{2}{r}\left(3m-\frac{2Q^2}{r}\right)\int_{S_0}\alpha^2 dS_0
\end{equation}
where we have restricted to functions that satisfy \eqref{intalpha}. In
particular, equality is obtained in \eqref{Falpha} when the function $\alpha$
is an eigenfunction corresponding to $\lambda_1$.  Using this test function
$\alpha$ and the equality in \eqref{Falpha} we see that if $m<0$
then all spheres are unstable. On the other hand, if $|Q|<m$ from the
inequality \eqref{Falpha} we deduce that all spheres are stable, as in this
case $r\geq r_0$. If $0<m<|Q|$ we can define a critical radius
\begin{equation}
 r_c=\frac{2Q^2}{3m},
\end{equation}
such that if $r<r_c$ then the sphere is unstable and if $r>r_c$ it is stable.

The proof of \eqref{eq:2} comes from the analysis of \eqref{inSub} and
\eqref{inSuper}. The minimum of the r.h.s. in the range of applicability of
\eqref{inSub} is attained for the extremal case, that is, $\epsilon=1$, and
gives
\begin{equation}
 \frac{A_{sub}}{4\pi Q^2}\geq 1.
\end{equation}
For \eqref{inSuper} we have also that the minimum is obtained for $\epsilon=1$, although in this case the minimum is obtained as a limit,
\begin{equation}
 \frac{A_{super}}{4\pi Q^2}\geq\frac{4}{9}.
\end{equation}
Comparing the last two inequalities we obtain the bound \eqref{eq:2}. That there is no sphere that actually saturates the inequality is because if we take $\epsilon=1$ then we are in the extremal case and then $r_0>r_c$.
\end{proof}

\begin{proof}[Proof of theorem \ref{t:main3}]
  The proof follows the proof of \eqref{des1} in \cite{Dain:2011kb}. If one
  follows  \cite{christodoulou88}  it is possible to see that for a stable
  isoperimetric surface
\begin{equation}
 12\pi\geq\frac{1}{2}\int_\Sigma\left(R+\frac{3}{2}\chi^2+\b{\chi}_{AB}\b{\chi}^{AB}\right)dA_\Sigma,
\end{equation}
where $\b{\chi}_{AB}$ is the trace-free part of the second fundamental form of
$\Sigma$ and $R$ is the three-dimensional Ricci scalar. The constraint
equations in the three-dimensional manifold imply
\begin{equation}
 R+K^2-K_{ab}K^{ab}-2\Lambda=2(E^2+B^2).
\end{equation}
If we consider now that the data is maximal, $K=0$, and that we assume
$\Lambda\geq0$, combining the previous equations we have
\begin{equation}
 12\pi-\frac{1}{2}\int_\Sigma\left(\frac{3}{2}\chi^2+\b{\chi}_{AB}\b{\chi}^{AB}\right)dA_\Sigma
 \geq \int_\Sigma\left(E^2+B^2\right)dA_\Sigma 
\end{equation}
As shown in  \cite{Dain:2011kb},
\begin{equation}
 \int_\Sigma\left(E^2+B^2\right)dA_\Sigma \geq \frac{16\pi^2}{A}(Q^{2}_{E}+Q^{2}_{M}).
\end{equation}
Using this, neglecting the term $\b{\chi}_{AB}\b{\chi}^{AB}$ as it is always
positive and can be zero, and remembering that $\chi=constant$, we obtain
\eqref{eq:isoperimetric}. It is important to notice that discarding the term
$\b{\chi}_{AB}\b{\chi}^{AB}$ does not pose a risk to the inequality being
sharp, as this term is zero if the surface is umbilical.

If we evaluate \eqref{eq:isoperimetric} for Reissner-Nordstr\"om, we have
\begin{equation}
 4\pi(2mr-Q^2)\geq\frac{4}{3}\pi Q^2,
\end{equation}
which is saturated in the superextremal case for $r=r_c$ and is never saturated
in the subextremal case.
\end{proof}

\section{Acknowledgments}

We would like to thank Jan Metzger for nice discussions. AEA would like to
thank the General Relativity Group of FaMAF for hospitality.

SD is supported by CONICET (Argentina). This work was supported by grants PIP
112-200801-00754 of CONICET (Argentina), Secyt from Universidad Nacional de
C\'ordoba (Argentina), and a Partner Group grant of the Max Planck Institute
for Gravitational Physics (Germany). AEA is supported by SeCTyP from Universidad
Nacional de Cuyo (Argentina) through grant 06/M030.


\begin{thebibliography}{1}

\bibitem{Barbosa88}
J.~L. Barbosa, M.~Do~Carmo, and J.~Eschenburg.
\newblock Stability of hypersurfaces of constant mean curvature in riemannian
  manifolds.
\newblock {\em Mathematische Zeitschrift}, 197(1):123--138, 1988.

\bibitem{bonnor98}
W.~B. Bonnor.
\newblock A model of a spheroidal body.
\newblock {\em Classical and Quantum Gravity}, 15(2):351, 1998.

\bibitem{christodoulou88}
D.~Christodoulou and S.-T. Yau.
\newblock Some remarks on the quasi-local mass.
\newblock In {\em Mathematics and general relativity ({S}anta {C}ruz, {CA},
  1986)}, volume~71 of {\em Contemp. Math.}, pages 9--14. Amer. Math. Soc.,
  Providence, RI, 1988.

\bibitem{corvino07}
J.~Corvino, A.~Gerek, M.~Greenberg, and B.~Krummel.
\newblock On isoperimetric surfaces in general relativity.
\newblock {\em Pacific J. Math.}, 231(1):63--84, 2007.

\bibitem{dain12}
S.~Dain.
\newblock Geometric inequalities for axially symmetric black holes.
\newblock {\em Classical and Quantum Gravity}, 29(7):073001, 2012, 1111.3615.

\bibitem{Dain:2011kb}
S.~Dain, J.~L. Jaramillo, and M.~Reiris.
\newblock {Area-charge inequality for black holes}.
\newblock {\em Class. Quantum Grav.}, 29(3):035013, 2012, 1109.5602.

\bibitem{Metzger:2004pr}
J.~Metzger.
\newblock {Numerical computation of constant mean curvature surfaces using
  finite elements}.
\newblock {\em Class.Quant.Grav.}, 21:4625--4646, 2004, gr-qc/0408059.

\end{thebibliography}

\end{document}